\documentclass{article}
\usepackage{algorithm}
\usepackage{algpseudocode}
\usepackage{graphicx}
\usepackage{amssymb}
\usepackage{amsthm}

\newtheorem{dfn}{Definition}
\newtheorem{thm}{Theorem}

\newcommand{\ie}{i.e.\ }
\newcommand{\depend}[1]{\mathit{depend}(#1)}
\newcommand{\source}[1]{\mathit{source}(#1)}
\newcommand{\target}[1]{\mathit{target}(#1)}

\begin{document}

\title{Probabilistic Analysis of Loss in Interface Adapter Chaining}

\author{Yoo Chung \and Dongman Lee}

\maketitle

\begin{abstract}
  Interface adapters allow applications written for one interface to be reused with another interface without having to rewrite application code, and chaining interface adapters can significantly reduce the development effort required to create the adapters.  However, interface adapters will often be unable to convert interfaces perfectly, so there must be a way to analyze the loss from interface adapter chains in order to improve the quality of interface adaptation.  This paper describes a probabilistic approach to analyzing loss in interface adapter chains, which not only models whether a method can be adapted but also how well methods can be adapted.  We also show that probabilistic optimal adapter chaining is an NP-complete problem, so we describe a greedy algorithm which can construct an optimal interface adapter chain with exponential time in the worst case.
\end{abstract}

\section{Introduction}
\label{sec:introduction}

Network services are being developed all the time, along with the interfaces that specify how these services should be accessed.  Only a very small number of the interfaces to these services are standardized, and many interfaces can be developed for different services which have very similar functionality.  In order to access a different interface than a client was written for without rewriting the client, interface adapters could be used to convert invocations in one interface to another, which can also be chained to reduce the number of interface adapters that must be created~\cite{h:gamma1994,vayssiere:doa2001,gschwind:sem2002,motahari-nezhad:www2007,kim:percom2008}.

However, it is unlikely that interface adaptation can be done perfectly, since interfaces are usually developed independently of each other with no regard for compatibility.  Adaptation loss will usually result as certain methods cannot be adapted by the interface adapter, and the problem is only worse when adapters are chained.  Even analyzing how much loss results from an interface adapter chain is not a trivial problem that can be modeled as a shortest path problem.

Our previous work~\cite{kim:percom2008,chung:ietsoftw2010} took the approach of assuming that a method in a target interface could be implemented as long as all the prerequisite methods in the source interface were available.  However, a discrete approach such as this ignores the possibility of \emph{partial} adaptation of methods, where an adapted method may not be able to be invoked with all possible arguments because of limitations with methods in a source interface.  For a trivial example, negative numbers for a square root function cannot be handled if either the source interface or target interface are unaware of imaginary numbers.

We describe a probabilistic approach to handling the partial adaptation of methods, where the loss may occur not just due to missing functionality or methods, but also due to an interface adapter being unable to handle all arguments given for a method in a target interface.  We first investigate how probabilities should be expressed, where independence assumptions are made so that we can obtain a computational model that can be feasibly used in a real system.  Based on this probabilistic model, we define how to express probabilistic loss in interface adaptation and how to model interface adapters, which can then be used to probabilistically analyze loss in interface adapter chains.  As in the discrete approach~\cite{chung:ietsoftw2010}, probabilistic optimal adapter chaining is NP-complete, so we describe a greedy algorithm which can construct an optimal adapter chain with exponential run-time in the worst case.

This paper is structured as follows.  In section~\ref{sec:preliminaries}, we describe elements from the discrete approach which we use in developing the probabilistic approach.   In section~\ref{sec:probabilistic-math}, we formulate the probabilistic approach for analyzing loss in interface adapter chains.  Section~\ref{sec:complexity-probabilistic} shows that probabilistic optimal adapter chaining is NP-complete, and section~\ref{sec:greedy-probabilistic-algorithm} describes an algorithm which can construct an optimal adapter chain with exponential run-time in the worst case.  We discuss related work in section~\ref{sec:related-work}, and section~\ref{sec:conclusions} concludes.

\section{Related work}
\label{sec:related-work}

Ponnekanti and Fox~\cite{ponnekanti:percom2003} suggests using interface adapter chaining for network services to handle the different interfaces available for similar types of services.  They provide a way to query all services whose interfaces can be adapted to a known interface.  They also support lossy adapters, but the support is limited to detecting whether a particular method and specific parameters can be handled at runtime.  They do not provide a way to analyze the loss of an interface adapter chain, so they are unable to choose a chain with less loss when alternatives are available.

Gschwind~\cite{gschwind:sem2002} allows components to be accessed through a foreign interface and implements an interface adaptation system for Enterprise JavaBeans~\cite{ejb}.  It implements a centralized adapter repository that stores adapters, along with weights that mark the priority of an adapter.  Dijkstra's algorithm~\cite{dijkstra:mathematik1959} is used to construct the shortest interface adapter chain that adapts a source interface into a target interface.  While there is support for marking an adapter as lossy or not, it does not have the capability to properly analyze and compare the loss in interface adapter chains.

Vayssi\'{e}re~\cite{vayssiere:doa2001}  supports the interface adaptation of proxy objects for Jini~\cite{jini}.  The goal is to enable clients to use services even when they have different interfaces than expected.  It provides an adapter service which hooks into the lookup service, so that a client can use a proxy object without having to be aware that any adaptation occurs.  No consideration is spent on the possibility that interface adapters may not be perfect.

There is also other work using chained interface adapters which focus on maintaining backward compatibility as interfaces evolve~\cite{keller:ecoop1998,hallberg:monadreader2005,kaminski:cascon2006}.  Since these are applied to different versions of the same interface, they do not consider the possibility of adaptation loss, in contrast to other work where the focus is on adaptation between different interfaces with potentially irreconcilable incompatibilities.

\section{Preliminaries}
\label{sec:preliminaries}

In this section, we describe the bare essentials from a discrete approach of analyzing lossy interface adapter chaining~\cite{chung:ietsoftw2010}, which are necessary for the probabilistic approach developed in section~\ref{sec:probabilistic-math}.   In this section as well as in the rest of the paper, a range convention for the index notation used to express matrixes and vectors will also be in effect~\cite{index-notation}.

We take the view that an interface is a specification of a collection of methods (which can also be called operations, methods, member functions, etc.)\@ which specify the concrete syntax and types for invoking actions on a service (which can also be called an object, module, component, etc.)\@ that conforms to the interface.

An interface adapter transforms calls for one interface into calls for another.  For example, if one interface has a method \texttt{set\-Audio\-Properties} while another interface has methods \texttt{set\-Volume} and \texttt{set\-Balance}, an interface adapter could handle a call to \texttt{set\-Audio\-Property} with the former interface using calls to \texttt{set\-Volume} and \texttt{set\-Balance} when the actual service conforms to the latter interface.

Adapting interfaces using a chain of interface adapters means converting calls to an interface to another using one interface adapter, then converting them again to yet another interface with a subsequent interface adapter, and so on until we can convert calls for a desired source interface to calls for a desired target interface.

A method dependency matrix is used to express the methods in a source interface necessary for providing methods in a target interface:
\begin{dfn}
  \label{dfn:dependency-matrix}
  A \emph{method dependency matrix} $a_{ji}$ is a boolean matrix where:
  \begin{itemize}
  \item $a_{11}$ is true, while $a_{1i}$ is set to false for all $i \neq 1$.
  \item If method~$j$ can always be implemented in the target interface, set $a_{ji}$ to false for all $i$.
  \item If method~$j$ can never be implemented given the source interface, set $a_{j1}$ to true, while $a_{ji}$ is set to false for all $i \neq 1$.
  \item If method~$j$ depends on the availability of actual methods in the source interface, then $a_{j1}$ is false, while $a_{ji}$ is true if and only if method~$j$ in the target interface can be implemented only if method~$i$ in the source interface is available.
  \end{itemize}
\end{dfn}

Method dependency matrixes can be composed, which in effect models two interface adapters chained together as a single equivalent adapter in terms of loss:
\begin{dfn}
  Given method dependency matrixes $b_{kj}$ and $a_{ji}$, the \emph{composition operator} $\otimes$ of two method dependency matrixes is defined as:
  \begin{equation}
    b_{kj} \otimes a_{ji} = \bigvee_j (b_{kj} \wedge a_{ji})
  \end{equation}
\end{dfn}

\begin{thm}
The composition operator for method dependency matrixes is associative:
\[ c_{lk} \otimes (b_{kj} \otimes a_{ji}) = (c_{lk} \otimes b_{kj}) \otimes a_{ji} \]
\end{thm}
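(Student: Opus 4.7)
The plan is to reduce this to a routine calculation in boolean algebra, since $\otimes$ has exactly the structure of matrix multiplication over the boolean semiring $(\{0,1\},\vee,\wedge)$. The strategy is to expand both sides using the definition of $\otimes$ and then show that each expression collapses to the same ``triple disjunction'' over $j$ and $k$.

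First, I would expand the left-hand side: applying the definition once gives $c_{lk} \otimes (b_{kj} \otimes a_{ji}) = \bigvee_k \bigl(c_{lk} \wedge \bigvee_j (b_{kj} \wedge a_{ji})\bigr)$. Then, using distributivity of $\wedge$ over $\vee$ (which is valid since only finitely many indices are involved), I would pull $c_{lk}$ inside the inner disjunction to obtain $\bigvee_k \bigvee_j (c_{lk} \wedge b_{kj} \wedge a_{ji})$, relying on associativity of $\wedge$ to drop parentheses.

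Symmetrically, I would expand the right-hand side: $(c_{lk} \otimes b_{kj}) \otimes a_{ji} = \bigvee_j \bigl(\bigvee_k (c_{lk} \wedge b_{kj}) \wedge a_{ji}\bigr)$, and then distribute $a_{ji}$ into the inner disjunction (using commutativity of $\wedge$ to move it to the correct side) to obtain $\bigvee_j \bigvee_k (c_{lk} \wedge b_{kj} \wedge a_{ji})$.

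The final step is to observe that two finite disjunctions over disjoint index sets can be interchanged, so $\bigvee_k \bigvee_j = \bigvee_j \bigvee_k$, which matches the two expressions and completes the argument. I do not expect any real obstacle here; the only thing to be careful about is tracking which index is being summed at each application of $\otimes$ under the range convention, and making sure the distributivity step is stated for finite disjunctions rather than appealing to an infinite distributive law.
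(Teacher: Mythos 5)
Your proposal is correct: expanding both sides by the definition, distributing $\wedge$ over the finite disjunctions, and interchanging $\bigvee_k$ and $\bigvee_j$ is exactly the standard argument that boolean matrix composition (multiplication over the semiring $(\{\mathrm{false},\mathrm{true}\},\vee,\wedge)$) is associative, and each step you invoke is valid. The paper itself states this theorem without proof (it is carried over from the cited discrete-approach work), so there is nothing to contrast with; your argument is the expected one and has no gaps.
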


We can also define an interface adapter graph, which is a directed graph where interfaces are nodes and adapters are edges.  If there are interfaces $I_1$ and $I_2$ with an adapter $A$ that adapts source interface $I_1$ to target interface $I_2$, then $I_1$ and $I_2$ would be nodes in the interface adapter graph while $A$ would be a directed edge from $I_1$ to $I_2$.

\begin{dfn}
  An \emph{interface adapter graph} is a directed graph where interfaces are nodes and adapters are edges.  The source node for an edge corresponds to the source interface, while the target node for an edge corresponds to the target interface.
\end{dfn}

\section{Probabilistic analysis}
\label{sec:probabilistic-math}

When an interface adapter translates a call to a method in the target interface to calls to a method in the source interface, it is possible that the translation cannot be done perfectly.  If the source interface lacks certain capabilities, then the adapter may not be able to properly process specific parameters received by a method.

\begin{figure}
  \centering
  \includegraphics[width=10cm]{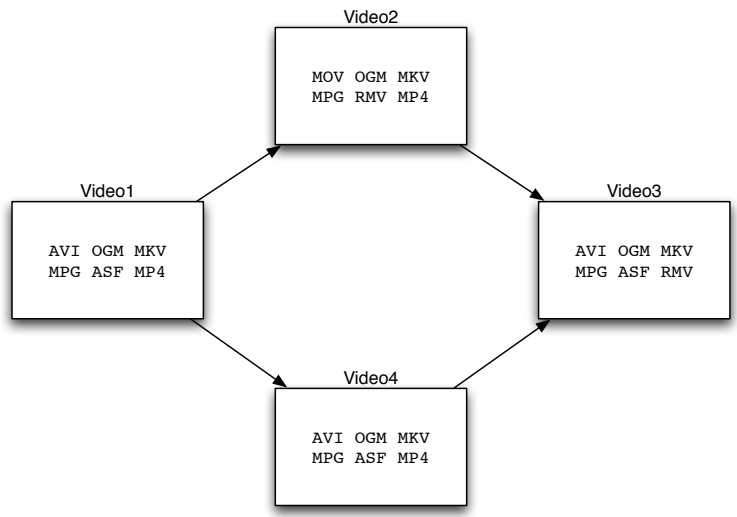}
  \caption{Adapting playback methods in video playback interfaces which support different video formats.}
  \label{fig:example}
\end{figure}

For example, there may be multiple video playback interfaces with adapters between them as in figure~\ref{fig:example}, where each interface is only able to handle a specific set of video formats.  For instance, if client code is written to access interface \textit{Video2} but the actual service has interface \textit{Video1}, then parameters to the playback method for \textit{Video2} with formats \texttt{MOV} and \texttt{RMV} cannot be handled properly.\footnote{We ignore the possibility that video conversion could be done by the adapter itself.}  In another situation where client code is written for \textit{Video3} but the actual service conforms to \textit{Video1}, we would need an interface adapter chain from \textit{Video1} to \textit{Video3}, and we would like to know if the chain that goes through \textit{Video2} or the one that goes through \textit{Video4} is better.

With the discrete approach, which only looks at whether methods are available or not, we must make a choice about what to do with methods that can be only partially adapted.  Conservatively treating such methods as being unavailable excludes the use of interface adapter chains that can do an imperfect but mostly complete job of adapting such methods.  On the other hand, optimistically treating such methods as being available could result in the selection of an interface adapter chain that is much worse than other chains in terms of how complete the adaptation is.  In figure~\ref{fig:example}, \textit{Video3} could not be adapted from \textit{Video1} at all with the conservative treatment, while with the optimistic treatment we would not be able to determine that the chain which goes through \textit{Video4}, which can support \texttt{AVI}, \texttt{OGM}, \texttt{MKV}, \texttt{MPG}, and \texttt{ASF}, is superior than the one that goes through \textit{Video2}, which can only support \texttt{OGM}, \texttt{MKV}, and \texttt{MPG}.

The probabilistic approach we introduce here takes into account that methods can be partially adapted, relaxing the binary limitation of only treating a method as available or not.

\subsection{Probabilistically modeling interface adaptation}
\label{sec:prob-model}

We develop a probabilistic approach by starting off with the most general form of expressing the probabilities and adding assumptions until we have a probabilistic formula that is practical.  Without additional assumptions, the probabilities can only be expressed in a way that is useless for analyzing real systems.  The additional assumptions allow us to express the desired probabilities in a way that they can be feasibly computed from a set of values that can be measured in practice.\footnote{There is a more precise approach using abstract interpretation that does not rely on such assumptions, but it is much more difficult to set up and requires exponential space complexity~\cite{l:chung2010:absintp}.}

We first describe the notation for expressing certain probabilistic events in table~\ref{tab:probabilistic-events}.  These events denote whether a method can handle a given argument, or whether an interface adapter can convert an argument for a method in a target interface to an argument for a method in the source interface and successfully convert back the result.  We assume that a method only accepts a single argument: this is not a problem since methods with multiple arguments can simply be modeled as a method accepting a single tuple with multiple components.  If a method does not need an argument, we treat it as receiving a dummy argument anyway.

\begin{table}
  \centering
  \begin{tabular}{cp{9cm}}
    $V_{m,I}(a)$ & Method~$m$ of interface~$I$ can properly handle argument~$a$. \\ \\
    $V_{m,I}$ & Method~$m$ of interface~$I$ can properly handle its argument. \\ \\
    $C^A_{m \rightarrow m'}$ & Interface adapter~$A$ can successfully convert an argument for method~$m$ in the target interface to an argument for method~$m'$ in the source interface and convert back the result.
  \end{tabular}
  \caption{Probabilistic events.}
  \label{tab:probabilistic-events}
\end{table}

Let us say that we wish to adapt methods in source interface~$I_S$ into method~$j$ in target interface~$I_T$.  The most general form for expressing the probability that a method could handle an argument is to sum the probabilities for every possible argument, where we must consider the probability of the method receiving a specific argument and then the probability that the method can handle it:

\begin{equation}
  \label{eq:general-probability}
  P(V_{j,I_T}) = \sum_a P(V_{j,I_T}(a)) P(A = a)
\end{equation}

The most general form for expressing the probability requires that we know the probability distribution of arguments, which is not feasible except for the simplest of argument domains.  For example, the probability distribution for a simple integer argument may require $2^{32}$ or $2^{64}$ probabilities to be expressed for the typical computer architecture, and even measuring such a probability distribution may not be feasible in the first place.  It is also not feasible that we already know the probabilities for how a method can handle each and every possible argument.

For this reason, we make the assumption that the probabilities do \emph{not} depend on the specific arguments.  Given this assumption, we can now express $P(V_{j,I_T})$ in terms of whether an argument can be converted and whether it can be handled.  More specifically, this means that for \emph{all} methods in the source interface that the interface adapter~$A$ requires to implement a method in the target interface, it must be the case that the argument can be converted \emph{and} the method in the source interface can handle the converted argument.  Using the method dependency matrix $a_{ji}$ for adapter~$A$, $P(V_{j,I_T})$ can be expressed as:

\begin{equation}
  \label{eq:prob-ind-args}
  P(V_{j,I_T}) = P\left( \bigcap_{a_{ji}} \left(V_{i,I_S} \cap C^A_{j \rightarrow i}\right) \right)
\end{equation}

This is still too unwieldy an expression to be practical, since it is unclear how dependencies in the events for different methods in the source interface affect the overall probability.  It would also be unclear how to measure the probabilities beforehand without trying out every possible argument and configuration of interface adapter chains, something that is clearly not feasible. Therefore we make an additional assumption that the events for separate methods in the source interface are independent.

With the additional assumption, $P(V_{j,I_T})$ can be expressed as:

\begin{equation}
  \label{eq:prob-indy-meth}
  P(V_{j,I_T}) = \prod_{a_{ji}} P( V_{i,I_S} \cap C^A_{j \rightarrow i} )
\end{equation}

However, equation~(\ref{eq:prob-indy-meth}) is still not appropriate for practical use.  The reason is that it entangles the work done by the interface adapter and whether the method in the source interface can handle the converted argument.  Basically, the probabilities intrinsic to the interface adapter and the source interface are entangled.  If the source interface itself is the result of adaptation through an interface adapter chain, then we have the problem of a configuration-dependent event being entangled with a configuration-independent event, and there is no simple way to derive the required probabilities.

Thus we make one final additional assumption that the probability an interface adapter can successfully convert arguments and results is independent from the probability that a method in the source interface can handle an argument.  This allows us to express $P(V_{j,I_T})$ as:

\begin{equation}
  \label{eq:practical-probability}
  P(V_{j,I_T}) = \prod_{a_{ji}} P( V_{i,I_S} ) P( C^A_{j \rightarrow i} )
\end{equation}

Equation~(\ref{eq:practical-probability}) is finally in a form that can be used practically.  The probability that an interface adapter~$A$ can successfully convert an argument for method~$j$ in the target interface to an argument for method~$i$ in the source interface, $P( C^A_{j \rightarrow i} )$, is a value that is intrinsic to an interface adapter.  In principle, it could be measured empirically by exhaustively testing the interface adapter to see which arguments it can accept, although in practice more sophisticated testing based on random samples would be used.  It might even be possible to obtain the probabilities through analysis of the interface adapter code.  The probability that method~$m_i$ in source interface~$I_S$ can handle an argument, $P( V_{i,I_S} )$, is also a value that can be obtained, either through analytical or empirical means similar to measuring probabilities from interface adapters if $I_S$ is an interface to an actual service, or through a recursive application of equation~(\ref{eq:practical-probability}) when $I_S$ is an adapted interface.

\subsection{Formalizing adapter loss}
\label{sec:prob-formalism}

We now have the basis for describing a framework similar to the one developed for the discrete chain approach.  We define a method availability vector and a method dependency matrix, but in addition we also define a \emph{conversion probability matrix}.

As before, the method availability vector~$p_i$ expresses how well a method is supported in an interface, and it is not intrinsic to an interface but rather represents the loss from interface adaptation.  The components for a method availability vector in the probabilistic approach are probabilities.  $p_i$ is defined as the probability that method~$i$ can handle an argument it receives, \ie $p_i = P(V_{i,I})$.

The method dependency matrix is the same as defined in section~\ref{sec:preliminaries} and is used in equation~(\ref{eq:practical-probability}).  Unlike for the discrete chain approach, however, the method dependency matrix does not suffice to describe the relevant information for an interface adapter.  We also require a set of probabilities~$P(C^A_{j \rightarrow i})$ for how well an interface adapter converts an argument for a method in the target interface to that for the relevant method in the source interface.  The conversion probability matrix~$t_{ji}$ is defined in terms of these probabilities, where $t_{ji} = P(C^A_{j \rightarrow i})$.

Given method availability vector~$p_i$, method dependency matrix~$a_{ji}$, and conversion probability matrix~$t_{ji}$, we can now define the adaptation operator~$\otimes$.  Instead of just the method dependency matrix being applied to the method availability vector, the conversion probability matrix must also be applied in conjunction with the method dependency matrix:

\begin{dfn}
  Given method dependency matrix $a_{ji}$, conversion probability matrix $t_{ji}$, and method availability vector $p_i$, the \emph{probabilistic adaptation operator} $\otimes$ is defined as:
  \begin{equation}
    \label{eq:prob-operator}
    (a_{ji}, t_{ji}) \otimes p_i = \prod_{a_{ji}} t_{ji} \, p_i
  \end{equation}
\end{dfn}

\begin{dfn}
A tuple $(a_{ji}, t_{ji})$ of a method dependency matrix and a conversion probability matrix is called a \emph{probabilistic adaptation factor}.  The probabilistic adaptation factor for an interface adapter~$A$ is denoted as $\depend{A}$.
\end{dfn}

It should be emphasized that equation~(\ref{eq:prob-operator}) is only rigorously correct given the following three assumptions.  However, the three assumptions make it possible to feasibly compute $P(V_{i,I})$ from values that can be feasibly measured or estimated a~priori in a rigorously sound manner, instead of having to define an ad~hoc computational framework where definitions are vague in their operational meaning.  While it is not hard to see that the assumptions would not hold for most real systems, it is an open question how closely the probabilistic approach based on these assumptions approximates actual losses due to interface adaptation.

\begin{itemize}
\item The probabilities do \emph{not} depend on the specific arguments.
\item The events for separate methods in the source interface are independent.
\item The probability that an interface adapter can successfully convert arguments and results is independent from the probability that a method in the source interface can handle an argument.
\end{itemize}

It should be noted that equation~(\ref{eq:prob-operator}) is incomplete in that it is ambiguous what the result should be when no $a_{ji}$ is true.  If this is the case, it could be that the method in the target interface can always be implemented regardless of availability of methods in the source interface, or it could be that the method cannot be implemented no matter what.

The workaround is simple: a dummy method is defined for each interface, where the method dependency matrixes follow the same rules.  For the conversion probability matrix, setting $t_{j1}$ to zero for all $j$ would yield the expected results, given the usual convention that an empty product has a value of one~\cite{lang2002-emptyproduct}.\footnote{The values for $t_{1i}$ do not matter except for $i=1$, so they can be arbitrarily set to zero.}  We will denote a method availability vector for interface~$I$ in which all methods are available and can handle all arguments by $\mathbf{1}'_I$, where all components have value one except for the component corresponding to the dummy method, which has value zero.

\subsection{Adapter composition}
\label{sec:prob-composition}

We would like to be able to derive a composite probabilistic adaptation factor from the composition of two probabilistic adaptation factors, which would be equivalent to describing the chaining of two interface adapters as if they were a single interface adapter.

Given interfaces $I_1$, $I_2$, and $I_3$, let the corresponding method availability vectors be $p_i$, $q_j$, and $r_k$.  In addition, let there be interface adapters $A_1$ and $A_2$, where $A_1$ converts $I_1$ to $I_2$ and $A_2$ converts $I_2$ to $I_3$, with corresponding probabilistic adaptation factors $(a_{ji}, t_{ji})$ and $(b_{kj}, u_{kj})$, respectively.  We would like to know how to derive the probabilistic adaptation factor $(c_{ki}, v_{ki})$ that would correspond to an interface adapter equivalent to $A_1$ and $A_2$ chained together.

$c_{ki}$ is obviously derived in the same way as specified by the composition operator in section~\ref{sec:preliminaries}.  As for $v_{ki}$, from equation~(\ref{eq:practical-probability}) and our assumptions:

\begin{eqnarray}
  \nonumber  r_k & = &  \prod_{b_{kj}} u_{kj} \, q_j \\
  \nonumber & = & \prod_{b_{kj}} \left( u_{kj} \prod_{a_{ji}} t_{ji} \, p_i \right) \\
  \nonumber & = & \prod_{b_{kj}} \prod_{a_{ji}} u_{kj}  \, t_{ji} \, p_i \\
  & = & \prod_{b_{kj} \wedge a_{ji}} u_{kj}  \, t_{ji} \, p_i \label{eq:prob-wanted-comp-res}
\end{eqnarray}

We want the above to be equivalent to the following:

\begin{eqnarray}
  \nonumber r_k & = & \prod_{c_{ki}} v_{ki} \, p_i \\
  & = & \prod_{\bigvee_j (b_{kj} \wedge a_{ji})} v_{ki} \, p_i \label{eq:prob-wanted-comp-form}
\end{eqnarray}

The composition operator is derived by carefully considering the terms in equations~(\ref{eq:prob-wanted-comp-res}) and~(\ref{eq:prob-wanted-comp-form}), based on collecting the terms for fixed $i$.

If we collect the terms in equation~(\ref{eq:prob-wanted-comp-res}) with fixed $i$, we have (\ref{eq:prob-wanted-terms}).  It should be emphasized that (\ref{eq:prob-wanted-terms}) is \emph{not} identical to (\ref{eq:prob-wanted-comp-res}): the former is a product over varying $j$ with \emph{both} $i$ and $k$ fixed, while the latter is a product over varying $i$ and $j$ with only $k$ fixed.  Also note that if $b_{kj} \wedge a_{ji}$ are all false for varying $j$, then no terms affect the result of (\ref{eq:prob-wanted-comp-res}).  This would be equivalent to (\ref{eq:prob-wanted-terms}) having a value of one, which is expected from an empty product.

\begin{equation}
  \label{eq:prob-wanted-terms}
  \prod_{b_{kj} \wedge a_{ji}} u_{kj}  \, t_{ji} \, p_i
\end{equation}

On the other hand, consider the term in equation~(\ref{eq:prob-wanted-comp-form}) with fixed $i$.  If $\bigvee_j (b_{kj} \wedge a_{ji})$ is false, \ie $b_{kj} \wedge a_{ji}$ are all false for varying $j$, then the term is excluded from the product and is equivalent to multiplying by one, instead.  If it is true, on the other hand, then $v_{ki} \, p_i$ is the term that corresponds to the fixed $i$.  So if we set $v_{ki} \, p_i$ according to (\ref{eq:prob-comp}),\footnote{Remember that only $k$ is fixed in (\ref{eq:prob-wanted-comp-res}) and (\ref{eq:prob-wanted-comp-form}), but both $k$ and $i$ are fixed in (\ref{eq:prob-comp}).} then equations~(\ref{eq:prob-wanted-comp-form}) and~(\ref{eq:prob-wanted-comp-res}) end up having the exact same values.

\begin{equation}
  \label{eq:prob-comp}
  v_{ki} = \prod_{b_{kj} \wedge a_{ji}} u_{kj}  \, t_{ji}
\end{equation}

From this, we can conclude that the composition operator~$\otimes$ for two probabilistic adaptation factors should be defined as in definition~\ref{dfn:prob-comp-operator}:

\begin{dfn}
  \label{dfn:prob-comp-operator}
  Given probabilistic adaptation factors $(b_{kj}, u_{kj})$ and $(a_{ji}, t_{ji})$, the \emph{probabilistic composition operator} $\otimes$ is defined as:
 \begin{equation}
   \label{eq:prob-comp-operator}
   (b_{kj}, u_{kj}) \otimes (a_{ji}, t_{ji}) =
   (b_{kj} \otimes a_{kj}, \prod_{b_{kj} \wedge a_{ji}} u_{kj}  \, t_{ji})
 \end{equation}
\end{dfn}
 
The $\otimes$ operator is ``associative'' when applied to a probabilistic adaptation factors and a method availability vector:\footnote{It is technically not associative in this context since the $\otimes$ operator in $(b_{kj}, u_{kj}) \otimes (a_{ji}, t_{ji}) $ is not the same as the $\otimes$ operator in $(a_{ji}, t_{ji}) \otimes p_i$.}

\begin{thm}
Applying the adaptation operator twice to a method availabiliy vector is the same as applying the composition operator and then applying the adaptation operator:
\[ (b_{kj}, u_{kj}) \otimes ((a_{ji}, t_{ji}) \otimes p_i) = ((b_{kj}, u_{kj}) \otimes (a_{ji}, t_{ji})) \otimes p_i \]
\end{thm}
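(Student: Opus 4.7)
The plan is to expand both sides of the claimed equality using the definitions of the adaptation operator~(\ref{eq:prob-operator}) and the composition operator~(\ref{eq:prob-comp-operator}), then observe that the two resulting expressions coincide. In fact the work is already essentially done in section~\ref{sec:prob-composition}, where the composition operator was reverse-engineered so as to make exactly this identity hold; the proof only needs to run that derivation forward and verify that no steps are circular.

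First I would compute the left-hand side by successive application of the adaptation operator. Writing $q_j = (a_{ji}, t_{ji}) \otimes p_i = \prod_{a_{ji}} t_{ji} \, p_i$ for the intermediate method availability vector, the outer application expands to $(b_{kj}, u_{kj}) \otimes q_j = \prod_{b_{kj}} u_{kj} \, q_j$. Substituting the definition of $q_j$ and using commutativity and associativity of ordinary multiplication collapses the nested product into the single product $\prod_{b_{kj} \wedge a_{ji}} u_{kj} \, t_{ji} \, p_i$ over all pairs $(i,j)$ for which $b_{kj}$ and $a_{ji}$ both hold, which is equation~(\ref{eq:prob-wanted-comp-res}). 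Next I would expand the right-hand side: by definition~\ref{dfn:prob-comp-operator} the composed factor is $(c_{ki}, v_{ki})$ with $c_{ki} = \bigvee_j (b_{kj} \wedge a_{ji})$ and $v_{ki} = \prod_{b_{kj} \wedge a_{ji}} u_{kj} \, t_{ji}$, and applying this factor to $p_i$ yields $\prod_{c_{ki}} v_{ki} \, p_i$, which is equation~(\ref{eq:prob-wanted-comp-form}). The remaining task is to identify these two products, which I would do by regrouping the left-hand expression by fixed $i$: for each $i$ at which $\bigvee_j (b_{kj} \wedge a_{ji})$ holds, the $j$-factors of the left-hand side produce exactly $v_{ki} \, p_i$ by the definition of $v_{ki}$; for each $i$ at which the disjunction fails, no $j$-term contributes on the left, and the factor is absent on the right, so both sides multiply by one under the empty-product convention.

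The main delicate point is the index-set bookkeeping for those $i$ at which no $j$ satisfies $b_{kj} \wedge a_{ji}$, since the grouping step tacitly identifies ``no factor'' with ``factor equal to one'' on both sides simultaneously; I would state this invocation of the empty-product convention explicitly. Beyond that the argument is purely algebraic, and the three independence assumptions of section~\ref{sec:prob-model} do not need to be reinvoked — they are already baked into the operator definitions, so the theorem is a formal identity between symbolic products rather than a probabilistic fact requiring additional assumptions.
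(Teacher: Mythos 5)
Your proof is correct and takes essentially the same route as the paper's: expand the left-hand side into the single product $\prod_{b_{kj} \wedge a_{ji}} u_{kj}\, t_{ji}\, p_i$, regroup by fixed $i$, recognize the inner $j$-product as $v_{ki}$ from definition~\ref{dfn:prob-comp-operator}, and invoke the empty-product convention for those $i$ where $\bigvee_j (b_{kj} \wedge a_{ji})$ is false. Note only that your regrouping step, like the corresponding line in the paper's own proof, silently factors $p_i$ out of the $j$-product exactly once per $i$ even though on the left-hand side $p_i$ occurs once for every $j$ with $b_{kj} \wedge a_{ji}$ true, so you are faithfully reproducing the paper's argument, tacit assumptions included.
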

\begin{proof}
\begin{eqnarray*}
  (b_{kj}, u_{kj}) \otimes ((a_{ji}, t_{ji}) \otimes p_i)
  & = & (b_{kj}, u_{kj}) \otimes \prod_{a_{ji}} t_{ji} \, p_i \\
  & = & \prod_{b_{kj}} u_{kj} \prod_{a_{ji}} t_{ji} \, p_i \\
  & = & \prod_{b_{kj}} \prod_{a_{ji}} u_{kj}  \, t_{ji} \, p_i \\
  & = & \prod_{b_{kj} \wedge a_{ji}} u_{kj}  \, t_{ji} \, p_i \\
  & = & \prod_{\bigvee_j (b_{kj} \wedge a_{ji})} \prod_{b_{kj} \wedge a_{ji}} u_{kj}  \, t_{ji} \, p_i \\
  & = & \prod_{b_{kj} \otimes a_{ji}} \left( \prod_{b_{kj} \wedge a_{ji}} u_{kj}  \, t_{ji} \right) p_i \\
  & = & (b_{kj} \otimes a_{ji}, \prod_{b_{kj} \wedge a_{ji}} u_{kj}  \, t_{ji}) \otimes p_i \\
  & = & ((b_{kj}, u_{kj}) \otimes (a_{ji}, t_{ji})) \otimes p_i
\end{eqnarray*}
\end{proof}

Likewise, probabilistic adaptation factor composition is associative:

\begin{thm}
The composition operator for probabilistic adaptation factors is associative:
\[ (c_{lk}, v_{lk}) \otimes ((b_{kj}, u_{kj}) \otimes (a_{ji}, t_{ji}))
= ((c_{lk}, v_{lk}) \otimes (b_{kj}, u_{kj})) \otimes (a_{ji}, t_{ji}) \]
\end{thm}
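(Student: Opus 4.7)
The plan is to reduce the statement to two coordinate equalities by applying Definition~\ref{dfn:prob-comp-operator} on each side. Writing both sides as tuples, the first coordinate is a pure method-dependency composition, while the second coordinate involves nested products of the conversion probabilities. For the first coordinate, I simply invoke the already-established associativity of $\otimes$ on method dependency matrices from section~\ref{sec:preliminaries}, so both sides' first coordinates collapse to the same boolean matrix $(c_{lk} \otimes b_{kj}) \otimes a_{ji} = c_{lk} \otimes (b_{kj} \otimes a_{ji})$. All the genuine work is in the second coordinate.

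For the second coordinate I would expand each side by first applying the definition to the inner composition, introducing intermediate probability matrices (say $w_{ki} = \prod_{b_{kj} \wedge a_{ji}} u_{kj}\, t_{ji}$ on the left and $x_{lj} = \prod_{c_{lk} \wedge b_{kj}} v_{lk}\, u_{kj}$ on the right), and then applying the definition again to the outer composition. The key product manipulation is the one already used in the immediately preceding proof, namely
\[
\prod_{P} \prod_{Q} F \;=\; \prod_{P \wedge Q} F,
\]
applied to rewrite each doubly-nested product as a single product over a conjunction of index predicates. Pushing this conversion through on both sides, each should reduce to the common expression
\[
\prod_{c_{lk}\, \wedge\, b_{kj}\, \wedge\, a_{ji}} v_{lk}\, u_{kj}\, t_{ji},
\]
a product over the triples $(k,j)$ (with $l$ and $i$ fixed) for which all three boolean entries are true. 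Reassembling this back into the required form then uses the identity $\prod_{R} X = \prod_{\bigvee R} \prod_{R} X$ from the preceding proof, exactly as done there, so that the product can be re-expressed in the shape demanded by $\otimes$ applied to the outer composed factor.

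The main obstacle I expect is bookkeeping, not mathematics: each $\otimes$ in the statement is formally a different operator (composition of factors versus adaptation onto a vector are not even in play here, but the two compositions act at different index levels), and the products hide which variables are bound and which are free. In particular, when I expand $(c_{lk}, v_{lk}) \otimes (b_{kj} \otimes a_{ji}, w_{ki})$, the outer product runs over $k$ with $l,i$ fixed, while $w_{ki}$ itself hides a product over $j$; symmetrically on the other side. Keeping these scopes straight, so that the same triple-product emerges on both sides, is the only delicate point. Once the index scopes are made explicit, the argument mirrors the preceding theorem's proof essentially line for line, so I would present it as a single display-style chain of equalities in the same style, with the two middle steps being the $\prod\prod = \prod_\wedge$ consolidation and its symmetric inverse.
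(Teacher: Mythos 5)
Your proposal is correct and follows essentially the same route as the paper's proof: expand via Definition~\ref{dfn:prob-comp-operator}, flatten the nested products to the common expression $\prod_{c_{lk} \wedge b_{kj} \wedge a_{ji}} v_{lk}\,u_{kj}\,t_{ji}$ using the fact that $b_{kj} \wedge a_{ji}$ implies $\bigvee_j (b_{kj} \wedge a_{ji})$ (so the disjunctive predicate can be dropped or reinserted at will), and then regroup into the outer composed factor; the first coordinate is handled by the boolean associativity theorem, just as the paper implicitly does. The only difference is presentational — you meet in the middle from both sides rather than chaining left to right — which does not change the argument.
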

\begin{proof}
Using the fact that $b_{kj} \otimes a_{ji} = \bigvee_j (b_{kj} \wedge a_{ji})$ must be true if $b_{kj} \wedge a_{ji}$ is true, we have:
\begin{eqnarray*}
  \lefteqn{(c_{lk}, v_{lk}) \otimes ((b_{kj}, u_{kj}) \otimes (a_{ji}, t_{ji}))} \\
  & = & (c_{lk}, v_{lk}) \otimes (b_{kj} \otimes a_{kj}, \prod_{b_{kj} \wedge a_{ji}} u_{kj}  \, t_{ji}) \\
  & = & (c_{lk} \otimes b_{kj} \otimes a_{kj}, \prod_{c_{lk} \wedge (b_{kj} \otimes a_{kj})} v_{lk} \prod_{b_{kj} \wedge a_{ji}} u_{kj}  \, t_{ji}) \\
  & = & (c_{lk} \otimes b_{kj} \otimes a_{kj}, \prod_{c_{lk} \wedge b_{kj} \wedge a_{ji}  \wedge (b_{kj} \otimes a_{kj})} v_{lk} \, u_{kj}  \, t_{ji}) \\
  & = & (c_{lk} \otimes b_{kj} \otimes a_{kj}, \prod_{c_{lk} \wedge b_{kj} \wedge a_{ji}} v_{lk} \, u_{kj}  \, t_{ji}) \\
  & = & (c_{lk} \otimes b_{kj}  \otimes a_{ji}, \prod_{(c_{lk} \otimes b_{kj}) \wedge c_{lk} \wedge b_{kj} \wedge a_{ji}} v_{lk} \, u_{kj} \, t_{ji} \\
  & = & (c_{lk} \otimes b_{kj}  \otimes a_{ji}, \prod_{(c_{lk} \otimes b_{kj}) \wedge a_{ji}} \left( \prod_{c_{lk} \wedge b_{kj}} v_{lk} \, u_{kj} \right) t_{ji} \\
 & = & (c_{lk} \otimes b_{kj}, \prod_{c_{lk} \wedge b_{kj}} v_{lk} \, u_{kj}) \otimes (a_{ji}, t_{ji}) \\
  & = & ((c_{lk}, v_{lk}) \otimes (b_{kj}, u_{kj})) \otimes (a_{ji}, t_{ji})
\end{eqnarray*}
\end{proof}

However, probabilistic adaptation factor composition is \emph{not} commutative, as can be easily seen by considering the composition of probabilistic adaptation factors whose components are not square matrixes.

We can also show a monotonicity property, which formalizes the notion that extending an interface adapter chain results in worse adaptation loss:

\begin{thm}
If $A_1$ and $A_2$ are interface adapters, where $A_1$ converts $I_1$ to $I_2$ and $A_2$ converts $I_2$ to $I_3$, with $(a_{ji}, t_{ji}) = \depend{A_1}$ and $(b_{kj}, u_{kj}) = \depend{A_2}$ where they follow the rules for the dummy method in sections~\ref{sec:preliminaries}, let $p_k = (b_{kj}, u_{kj}) \otimes \mathbf{1}'_{I_2}$ and $p'_k = (b_{kj}, u_{kj}) \otimes (a_{ji}, t_{ji}) \otimes \mathbf{1}'_{I_1}$.  Then
\[ p'_k \leq p_k \]
\end{thm}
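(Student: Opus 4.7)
My plan is to reduce the claim to a componentwise comparison between two method availability vectors and then invoke monotonicity of products of numbers in $[0,1]$. First, I would apply the associativity theorem just proved to rewrite
\[ p'_k = (b_{kj}, u_{kj}) \otimes \bigl((a_{ji}, t_{ji}) \otimes \mathbf{1}'_{I_1}\bigr) = (b_{kj}, u_{kj}) \otimes q_j, \]
where $q_j := (a_{ji}, t_{ji}) \otimes \mathbf{1}'_{I_1}$. Now both $p_k$ and $p'_k$ arise from applying the same probabilistic adaptation factor $(b_{kj}, u_{kj})$ to a method availability vector, so it suffices to show (i)~that $0 \le q_j \le (\mathbf{1}'_{I_2})_j$ for every $j$, and (ii)~that the adaptation operator is monotone when applied to componentwise-ordered nonnegative inputs.

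For step~(i), I would case-split on row~$j$ of the dependency matrix $a_{ji}$ according to the four cases in definition~\ref{dfn:dependency-matrix}. When $j=1$, $a_{11}$ is the only true entry in the row and $(\mathbf{1}'_{I_1})_1 = 0$, so $q_1 = 0 = (\mathbf{1}'_{I_2})_1$. When method~$j$ is always implementable, $a_{ji}$ is false for all $i$ and the empty product gives $q_j = 1 = (\mathbf{1}'_{I_2})_j$. When method~$j$ is never implementable, only $a_{j1}$ is true, and the dummy-column convention $t_{j1}=0$ (or equivalently the zero entry $(\mathbf{1}'_{I_1})_1 = 0$) forces $q_j = 0 \le 1$. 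In the remaining general case, $q_j = \prod_{a_{ji},\, i \neq 1} t_{ji}$, which lies in $[0,1]$ because each $t_{ji}$ is a probability.

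For step~(ii), the adaptation operator is a finite product of nonnegative factors $u_{kj}\,p_j$, so if $0 \le q_j \le (\mathbf{1}'_{I_2})_j$ for every $j$ in the index set $\{j : b_{kj}\}$, then $0 \le u_{kj}\,q_j \le u_{kj}\,(\mathbf{1}'_{I_2})_j$ and the product inequality is preserved termwise. Combining~(i) and~(ii) yields $p'_k \le p_k$. The main obstacle is essentially bookkeeping rather than any deeper difficulty: one has to verify that the dummy-method conventions in definition~\ref{dfn:dependency-matrix} and the zero-column convention for $t$ line up correctly with the empty-product convention, so that the edge cases collapse cleanly to $0 \le 0$ or $0 \le 1$ rather than to an ill-defined expression.
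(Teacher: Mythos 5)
Your proof is correct and follows essentially the same route as the paper's: your componentwise bound $u_{kj}\,q_j \leq u_{kj}\,(\mathbf{1}'_{I_2})_j$ is exactly the paper's term-by-term comparison of $u_{kj} \prod_{i \neq 1 \wedge a_{ji}} t_{ji}$ against $u_{kj}$, with the dummy-method and never-implementable cases collapsing to zero in the same way. The only difference is packaging—you route the argument through the associativity theorem plus a monotonicity observation for the adaptation operator, while the paper expands the composed product directly—which does not change the underlying idea.
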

\begin{proof}
From our assumptions, we have:
\[ p_1 = p'_1 = 0 \]

\begin{equation}
  \label{eq:prob-p-k}
  p_k = \prod_{j \neq 1 \wedge b_{kj}} u_{kj}
\end{equation}

\begin{equation}
  \label{eq:prob-p-dash-k}
  p'_k = \prod_{i \neq 1 \wedge b_{kj} \wedge a_{ji}} u_{kj} \, t_{ji}
  = \prod_{b_{kj}} \prod_{i \neq 1 \wedge a_{ji}} u_{kj} \, t_{ji}
  = \prod_{b_{kj}}  \left( u_{kj} \prod_{i \neq 1 \wedge a_{ji}} t_{ji} \right)
\end{equation}

If method~$k$ can never be implemented given the source interface, then $b_{k1}$ will be true, and given that $u_{k1}$ will be zero, $p'_k$ will also have to be zero.  Otherwise, $b_{k1}$ will be false, so we can do a term by term comparison of equations~(\ref{eq:prob-p-k}) and~(\ref{eq:prob-p-dash-k}), taking advantage of the fact that $u_{kj}$ and $t_{ji}$ are probabilities so that they are greater than or equal to zero and lesser than or equal to one:

\[ 0 \leq \prod_{i \neq 1 \wedge a_{ji}} t_{ji} \leq 1 \]
\[ u_{kj} \prod_{i \neq 1 \wedge a_{ji}} t_{ji}  \leq u_{kj} \]

\begin{equation}
  \label{eq:prob-monotonicity}
  \therefore \; p'_k \leq p_k
\end{equation}
\end{proof}

The definitions of the method dependency matrix and the method availability vector in section~\ref{sec:prob-model}, along with the associativity rules proven in this section, provide a succinct way to mathematically express and analyze the chaining of lossy interface adapters using a probabilistic approach.

\subsection{An example}
\label{sec:example}

\begin{figure}
  \centering
  \includegraphics[width=10cm]{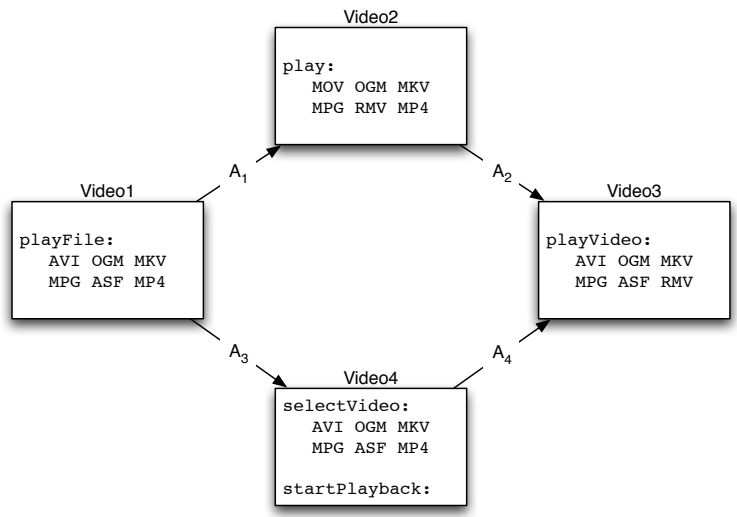}
  \caption{Adapting playback methods in video playback interfaces which support different video formats, expanded version of figure~\ref{fig:example}.}
  \label{fig:expanded-example}
\end{figure}

As an example, we apply the probabilistic approach to analyzing lossy interface chaining to the interface adapter graph of figure~\ref{fig:expanded-example}, which is a slightly expanded version of figure~\ref{fig:example}.  Instead of simply four interfaces each having a single playback method, one of the interfaces, \textit{Video4}, consists of two methods: the \texttt{select\-Video} method chooses a video file that should be played back, and the \texttt{start\-Playback} method actually begins video playback.  As with the example in figure~\ref{fig:example}, each interface can handle different video formats.

In this hypothetical scenario, there is an application written for interface \textit{Video3} which needs to use a video service that actually conforms to \textit{Video1}.  An interface adapter chain from \textit{Video1} to \textit{Video3} would be required if the application is to use the video service.  Since there are two possible interface adapter chains, one which goes through \textit{Video2} and another which goes through \textit{Video4}, we would want to use the chain that can support more video formats.

The interface adapter from \textit{Video1} to \textit{Video2} will be denoted $A_1$, the one from \textit{Video2} to \textit{Video3} will be denoted $A_2$, the one from \textit{Video1} to \textit{Video4} will be denoted $A_3$, and the one from \textit{Video4} to \textit{Video3} will be denoted $A_4$.  The method dependency matrix and conversion probability matrix for adapter~$A_k$ will be denoted $a^k_{ji}$ and $t^k_{ji}$, respectively.  For each interface adapter, we assume that all methods in the target interface can be implemented in terms of all the methods in the source interface.  For simplicity, we will not define dummy methods for any of the interfaces.

Since the single method \texttt{play} of \textit{Video2} depends only on the single method \texttt{play\-File} of \textit{Video1} for $A_1$, $a^1_{ji}$ only has a single true component.  The same is true for $a^2_{ji}$.  On the other hand, the \texttt{select\-Video} and \texttt{start\-Playback} methods of \textit{Video4} both depend on the single method \texttt{play\-File} of \textit{Video1} for $A_3$, so $a^3_{ji}$ has two rows corresponding to the methods in the target interface, each with a single true component corresponding to the method in the source interface.  The \texttt{play\-Video} method of \textit{Video3} depends on both methods of \textit{Video4}, so $a^4_{ji}$ has a single row with two true components.  The method dependency matrixes for each interface adapter are shown below:

\begin{displaymath}
  \begin{array}{cccc}
    a^1_{ji} = \left( \begin{array}{c} t \end{array} \right) &
    a^2_{ji} = \left( \begin{array}{c} t \end{array} \right) &
    a^3_{ji} = \left( \begin{array}{c} t \\ t \end{array} \right) &
    a^4_{ji} = \left( \begin{array}{cc} t & t \end{array} \right)
  \end{array}
\end{displaymath}

As for the conversion probability matrixes, a way to estimate the necessary probabilities is to compare the number of video formats each interface supports.\footnote{While this will not be accurate, it would be a relatively easy way to obtain a rough estimate that could be used for comparing the quality of different interface adapter chains.}  For $A_1$, among the formats \texttt{MOV}, \texttt{OGM}, \texttt{MKV}, \texttt{MPG}, \texttt{RMV}, and \texttt{MP4} that \textit{Video2} should be able to support, the adapted interface can only support \texttt{OGM}, \texttt{MKV}, \texttt{MPG}, \texttt{MP4} since these are supported by the source interface \textit{Video1}, so the conversion probability can be estimated as $\frac{4}{6}$.  Assuming that \texttt{start\-Playback} in \textit{Video4} has no arguments to be converted, the conversion probability matrixes can be set as in the following:

\begin{displaymath}
  \begin{array}{cccc}
    t^1_{ji} = \left( \begin{array}{c} \frac{4}{6} \end{array} \right) &
    t^2_{ji} = \left( \begin{array}{c} \frac{4}{6} \end{array} \right) &
    t^3_{ji} = \left( \begin{array}{c} 1 \\ 1 \end{array} \right) &
    t^4_{ji} = \left( \begin{array}{cc} \frac{5}{6} & 1 \end{array} \right)
  \end{array}
\end{displaymath}

We will first look at the interface adapter chain that starts from \textit{Video1}, passes through \textit{Video2}, and ends at \textit{Video3}.  Given a service conforming to \textit{Video1} that is fully functional, \ie supports all arguments it could receive, the sole component of the method availability vector corresponding to \textit{Video1} is a probability of one.  To see how the interface adapter chain formed from $A_1$ and $A_2$ adapts \textit{Video1} to \textit{Video3}, \ie the result of applying $A_1$ to \textit{Video1} and then applying $A_2$, we can use the adaptation operator:

\begin{displaymath}
  (a^2_{kj}, t^2_{kj}) \otimes (a^1_{kj}, t^1_{ji}) \otimes \left( \begin{array}{c} 1 \end{array} \right) = \left( \begin{array}{c} \frac{4}{9} \end{array} \right)
\end{displaymath}

We can also do the same thing for the interface adapter chain that starts from \textit{Video1}, passes through \textit{Video4}, and ends at \textit{Video3}, \ie the interface adapter chain formed from $A_3$ and $A_4$:

\begin{displaymath}
 (a^4_{kj}, t^4_{kj}) \otimes (a^3_{kj}, t^3_{ji}) \otimes \left( \begin{array}{c} 1 \end{array} \right) = \left( \begin{array}{c} \frac{5}{6} \end{array} \right)
\end{displaymath}

These results roughly estimate that when providing \textit{Video3} by adapting \textit{Video1}, the chain formed from $A_1$ and $A_2$ would allow the interface to handle about $\frac{4}{9}$ of the video files it is asked to play back, while the chain formed from $A_3$ and $A_4$ would allow the interface to handle about $\frac{5}{6}$ of the video files it is asked to play back.  This is consistent with how the former chain is worse in terms of only being able to handle \texttt{OGM}, \texttt{MKV}, and \texttt{MPG}, while the latter chain can handle significantly more formats, specifically \texttt{AVI}, \texttt{OGM}, \texttt{MKV}, \texttt{MPG}, and \texttt{ASF}.\footnote{While the example here is simple enough that we can easily figure out exactly what types of arguments can be handled, it can be prohibitively difficult to do so in the general case~\cite{l:chung2010:absintp}.}  In contrast, the discrete approach would tell us that the two chains are exactly the same.

By using probability estimates of how well each interface adapter can adapt a source interface to a target interface, the probabilistic analysis scheme for interface adapter chaining outlined in this paper can be used to compare the quality of an interface adapter chains where methods may not be adapted perfectly, in contrast to the discrete approach where methods are assumed to be adapted perfectly if they can be adapted at all.

\section{Probabilistic optimal adapter chaining}
\label{sec:complexity-probabilistic}

Like the optimal adapter chaining problem with the discrete chain approach, the optimal adapter chaining problem with the probabilistic approach is NP-complete as well.  This is intuitively the case since the probabilistic approach should be able to encompass the discrete approach, and we show this formally in this section.

We first formally define the optimal adapter chaining problem in the probabilistic approach, which we will call PROB-CHAIN.   Let us have an interface adapter graph $(\{I_i\}, \{A_i\})$, where $\{I_i\}$ is the set of interfaces and $\{A_i\}$ is the set of interface adapters.  Let $f^k$ be the probabilistic adaptation factor associated with adapter~$A_k$.  Let $S \in \{I_i\}$ be the source interface and $T \in \{I_i\}$ be the target interface.  Let $\{ r_m \}$ be the relative invocation probabilities for the methods in the target interface such that $\sum_m r_m = 1$.  Then the problem is whether there is an interface adapter chain $[A_{P(1)}, A_{P(2)}, \ldots, A_{P(n)}]$ such that the source of $A_{P(1)}$ is $S$, the target of $A_{P(n)}$ is $T$, and $\sum_m r_m \, v^T_m$  is at least as large as some probability $X$, where $v^T = f^{P(n)} \otimes \cdots \otimes f^{P(2)} \otimes f^{P(1)} \otimes \mathbf{1}'_S$.

Informally, this is an optimization problem which tries to maximize the probability that an argument can be handled by a method in a fixed target interface, obtained by applying an interface adapter chain on a fully-functional service which conforms to the source interface.  $\{ r_m \}$ would express how often methods are invoked relative to each other.

\begin{thm}
There is a reduction from the discrete approach to the probabilistic approach for analyzing loss in interface adapter chains.
\end{thm}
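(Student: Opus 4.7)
The plan is to exhibit a polynomial-time reduction that maps any instance of the discrete optimal adapter chaining problem (DISC-CHAIN, shown NP-complete in \cite{chung:ietsoftw2010}) to an equivalent instance of PROB-CHAIN, so that the yes-answer on one side corresponds to the yes-answer on the other. I would leave the interface adapter graph, the source interface $S$, the target interface $T$, and every method dependency matrix $a^k_{ji}$ exactly as they appear in the discrete instance. For each adapter $A_k$ I would then define a conversion probability matrix $t^k_{ji}$ whose entries are all $1$, except that I respect the dummy-method convention of section~\ref{sec:prob-formalism} by setting $t^k_{j1}=0$. The fully-functional source vector $\mathbf{1}'_S$ already plays the role of the ``all real methods available'' boolean vector on the discrete side.

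Next I would argue semantic equivalence. Since every $t^k_{ji}\in\{0,1\}$ and every component of $\mathbf{1}'_S$ is $0$ or $1$, a straightforward induction on the length of the chain using equation~(\ref{eq:prob-operator}) shows that every intermediate method availability vector has components in $\{0,1\}$, with component $m$ equal to $1$ iff, reading the same indices as booleans, method~$m$ would be marked as implementable by the corresponding discrete composition. With every $t^k_{ji}=1$, the product $\prod_{a_{ji}} t_{ji}\, p_i$ degenerates to the conjunction of the relevant $p_i$'s, which is precisely the operational meaning of a discrete dependency matrix acting on a boolean availability vector.

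The last step is to match the acceptance thresholds. The discrete problem asks whether at least $K$ methods of the target interface are implementable; I would encode this into PROB-CHAIN by choosing uniform invocation probabilities $r_m = 1/M$ over the $M$ real methods of $T$ and setting the probabilistic threshold $X = K/M$. Because each $v^T_m$ is $0$ or $1$ by the previous step, $\sum_m r_m\, v^T_m \geq X$ holds if and only if at least $K$ of the $v^T_m$ equal $1$, which is exactly the discrete condition. The entire construction is clearly computable in time linear in the size of the DISC-CHAIN instance.

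The main obstacle, and the place I would spend the most care, is getting the dummy-method bookkeeping exactly right. The probabilistic formulation relies on $t_{j1}=0$ together with the empty-product convention to disambiguate ``always implementable'' from ``never implementable,'' so I would verify that when $a_{j1}$ is true in the discrete instance the corresponding probabilistic product is forced to $0$ via $t_{j1}\,p_1 = 0$, while when the entire $j$th row of $a$ is false it collapses to an empty product of value $1$, matching the ``method~$j$ is always implementable'' case. Once that correspondence is pinned down, the rest of the argument is a routine restatement of boolean conjunctions as products of $0$s and $1$s.
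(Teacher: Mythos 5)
Your construction is essentially the paper's own: keep the dependency matrices, map boolean availability to $\{0,1\}$, and set the conversion probabilities to one so that the probabilistic product degenerates to the discrete conjunction, so the proposal is correct and takes the same approach. The only differences are cosmetic or additive: your choice $t_{j1}=0$ versus the paper's all-ones matrix is immaterial because the dummy component of $\mathbf{1}'_S$ is already zero, and the threshold matching with $r_m = 1/M$ and $X = K/M$ that you include here is what the paper defers to its subsequent proof that PROB-CHAIN is NP-complete.
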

\begin{proof}
Let there be a method availability vector~$p_i$ and a method dependency matrix~$a_{ji}$ as expressed in the discrete approach.  We construct corresponding method availability vector~$p'_i$, method dependency matrix~$a'_{ji}$, and conversion probability matrix~$t'_{ji}$ as expressed in the probabilistic approach as follows.  If $p_i$ is true, then set $p'_i$ to one, else set $p'_i$ to zero.  $a'_{ji}$ is just the same as $a_{ji}$.  And set all $t'_{ji}$ to one.  Then we have:
\[ a_{ji} \otimes p_i = \bigwedge_j (a_{ji} \rightarrow p_i) = \bigwedge_{a_{ji}} p_i \]
\[ (a'_{ji}, t'_{ji}) \otimes p'_i = \prod_{a'_{ji}} t'_{ji} \, p'_i = \prod_{a_{ji}} p'_i \]
and it  is easy to see that a component of $a_{ji} \otimes p_i$ is true if and only if the corresponding component of $(a'_{ji}, t'_{ji}) \otimes p'_i$ is one, and that a component of $a_{ji} \otimes p_i$ is false if and only if the corresponding component of $(a'_{ji}, t'_{ji}) \otimes p'_i$ is zero.

This shows how an interface adapter graph for the discrete approach can be converted to one for the probabilistic approach in a way that the adaptation operators in both approaches basically have the same behavior.  Since all the mathematics for both approaches follow from the definition of the adaptation operators, we have just shown that the probabilistic approach can encompass the discrete approach.
\end{proof}

Next, we formally describe the equivalent problem for the discrete approach, which we will call CHAIN and is NP-complete~\cite{chung:ietsoftw2010}.  Let us have an interface adapter graph $(\{I_i\}, \{A_i\})$, where $\{I_i\}$ is the set of interfaces and $\{A_i\}$ is the set of interface adapters.  Let $a^k$ be the method dependency matrix associated with adapter~$A_k$.  Let $S \in \{I_i\}$ be the source interface and $T \in \{I_i\}$ be the target interface.  Then the problem is whether there is an interface adapter chain $[A_{P(1)}, A_{P(2)}, \ldots, A_{P(m)}]$ such that the source of $A_{P(1)}$ is $S$, the target of $A_{P(m)}$ is $T$, and $\|v^T\| = \|a^{P(m)} \otimes \cdots \otimes a^{P(2)} \otimes a^{P(1)} \otimes \mathbf{1}'_S\|$ is at least as large as some parameter $N$.

\begin{thm}
PROB-CHAIN is NP-complete.
\end{thm}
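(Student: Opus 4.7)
The plan is a standard two-step NP-completeness argument: show PROB-CHAIN is in NP, then reduce from CHAIN (known NP-complete) using essentially the machinery of the preceding reduction theorem.

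For NP membership, my certificate will be an interface adapter chain. The one non-routine issue is bounding the chain length. I would argue that without loss of generality, an optimal chain visits each interface at most once, so its length is at most $|\{I_i\}|$. The argument has two ingredients: first, the adaptation operator is componentwise monotone in the method availability vector, because $(a_{ji}, t_{ji}) \otimes p_i = \prod_{a_{ji}} t_{ji} \, p_i$ is a product of nonnegative factors, so $p_i \leq \tilde p_i$ componentwise implies $(a,t) \otimes p \leq (a,t) \otimes \tilde p$; second, the monotonicity theorem just proved says that prepending an adapter to a chain starting from $\mathbf{1}'_I$ can only decrease the result. Combining these, if a chain visits interface $I$ twice (say after adapter $A_i$ and again after adapter $A_j$ with $i<j$), cutting out the cycle $A_{i+1}, \ldots, A_j$ yields a chain whose final availability vector dominates the original one componentwise, hence yields an at-least-as-large weighted sum $\sum_m r_m v^T_m$. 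So a polynomially bounded chain suffices as a certificate, and verification (composing factors along the chain, applying to $\mathbf{1}'_S$, forming the weighted sum, comparing with $X$) is polynomial in the input size.

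For NP-hardness, I would reduce from CHAIN. Given a CHAIN instance consisting of an interface adapter graph with method dependency matrixes $a^k$, source $S$, target $T$, and threshold $N$, I construct a PROB-CHAIN instance using the transformation from the preceding reduction theorem: keep $a^k$ unchanged, set every entry of the conversion probability matrix $t^k$ to one, and take the probabilistic method availability vector derived from $\mathbf{1}'_S$. The preceding theorem guarantees that a component of the composed result $v^T$ equals one in the probabilistic model exactly when the corresponding discrete component is true, and is zero otherwise. To translate the discrete threshold $N$ on $\|v^T\|$ into a probabilistic threshold, let $n_T$ be the number of non-dummy methods of $T$ and set $r_m = 1/n_T$ for each non-dummy method and $r_1 = 0$ for the dummy (so $\sum_m r_m = 1$), and set $X = N/n_T$. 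Then $\sum_m r_m v^T_m = \|v^T\|/n_T \geq X$ if and only if $\|v^T\| \geq N$, and the same chain witnesses both instances. The construction is clearly polynomial in the input size.

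The step I expect to be the main obstacle is the chain-length bound for NP membership, since PROB-CHAIN does not a priori restrict chains and repeated interfaces are permitted in principle; everything hinges on assembling the two monotonicity facts correctly. The reduction from CHAIN is essentially handed to us by the preceding theorem and amounts to picking the weights $r_m$ and threshold $X$ that rescale a count into a probability, so once the membership argument is in place the rest is bookkeeping.
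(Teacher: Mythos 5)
Your NP-hardness half is essentially the paper's own argument: the paper also feeds the instance through the preceding discrete-to-probabilistic reduction, sets all $r_m$ to $\frac{1}{M}$ ($M$ the number of target methods) and takes $X = \frac{N}{M}$; your variant with $r_1 = 0$ for the dummy method and $r_m = 1/n_T$ otherwise is only a bookkeeping refinement of that. The genuine problem is in your NP-membership argument. The cycle-cutting step needs the following: if $p$ is the availability vector at interface $I$ after the prefix and $q = (\hbox{cycle factor}) \otimes p$ is the vector at $I$ after also traversing the cycle, then $q \leq p$ componentwise. Neither of your two ingredients gives this. The paper's monotonicity theorem only compares two chains that are \emph{both} applied to fully available vectors $\mathbf{1}'$ (it says prepending an adapter at the source end, starting from $\mathbf{1}'$, cannot help); it does not say that applying an adaptation factor to an arbitrary vector $p$ produces something dominated by $p$. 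And that stronger statement is simply false in this formalism: if the composed dependency matrix of the cycle has a row $k$ with no true entries (``method $k$ can always be implemented''), the empty product makes the $k$-th component of $q$ equal to one, even when $p_k < 1$; likewise a row of the cycle that depends only on a better-preserved method $i$ with conversion probability near one can give $q_k > p_k$.

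Because of this, the claim ``an optimal chain visits each interface at most once'' can fail outright: take $S \to I \to T$ where the adapter $S \to I$ degrades method $m$ of $I$ to probability $\frac{1}{2}$, and add an adapter (or short cycle) from $I$ back to $I$ whose dependency row for $m$ is all false; then the chain that traverses the cycle restores $m$ to availability one and strictly beats every acyclic chain from $S$ to $T$. So your certificate-length bound, which is the one step you yourself flagged as the crux, does not stand as argued; some other bound on witness length (or a restriction of PROB-CHAIN to simple chains) is needed. For what it is worth, the paper does not close this gap either -- it disposes of membership in a single clause (``it is easy to verify if an alternate chain results in smaller $\sum_m r_m\,v^T_m$'') -- but your proposal makes the acyclicity claim explicit and justifies it with an argument that does not follow from the cited monotonicity facts.
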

\begin{proof}
Given $M$ methods in the target interface, use the method described above to convert an input for CHAIN to an input for PROB-CHAIN, where we also set all $r_m$ to $\frac{1}{M}$.  Then $\sum_m r_m \, v^T_m$ will be $\frac{n}{M}$, where $n$ is the number of methods available from the interface adapter chain, so PROB-CHAIN with $X$ set to $\frac{N}{M}$ will solve CHAIN.  Since CHAIN is NP-complete and it is easy to verify if an alternate chain results in smaller $\sum_m r_m \, v^T_m$, PROB-CHAIN must also be NP-complete.
\end{proof}

\section{A greedy algorithm}
\label{sec:greedy-probabilistic-algorithm}

As shown in section~\ref{sec:complexity-probabilistic}, the problem of finding an optimal interface adapter chain maximizing the probability of an argument being handled by a method in the target interface is an NP-complete problem.  Short of developing a polynomial-time algorithm for an NP-complete problem, practical systems will have to use a heuristic algorithm or an exponential-time algorithm with reasonable performance in practice.

Algorithm~\ref{alg:prob-greedy-algorithm} is a greedy algorithm that finds an optimal interface adapter chain between a given source interface and a target interface.  Given an interface adapter graph~$G$, it works by looking at every possible acyclic adapter chain with an arbitrary source that results in the target interface~$t$ in order of increasing loss, taking advantage of equation~(\ref{eq:prob-monotonicity}), until we find a chain that starts with the desired source interface~$s$.

In this context, loss means the probability that a method in the target interface \emph{cannot} handle an argument given a fully functional service with the source interface, which is computed in algorithm~\ref{alg:prob-loss}, so the algorithm is guaranteed to find the optimal interface adapter chain.  In the worst case, however, the algorithm takes exponential time since there can be an exponential number of acyclic chains in an interface adapter graph.

\begin{algorithm}
  \caption{A probabilistic greedy algorithm for interface adapter chaining.}
  \label{alg:prob-greedy-algorithm}
  \begin{algorithmic}
  \Procedure{Prob-Greedy-Chain}{$G = (V, E)$, $s$, $t$, $\{ r_m \}$}
    \State $\mathit{C} \gets \{ [] \}$
    \Comment{chains to extend}
    \State $\mathit{M} = \emptyset$
    \Comment{discarded chains}
    \State $D \gets \{[] \mapsto \mathbf{I}_{\dim(\mathbf{1}'_t)} \}$
    \Comment{method dependency matrixes}
    \While{$C \neq \emptyset$}
      \State $c \gets \mbox{element of $C$ minimizing $\textsc{Prob-Loss}(c,D,\{r_m\})$}$
      \If{$c \neq [] \wedge \source{c[1]} = s$}
        \State \textbf{return} $c$
      \ElsIf{no acyclic chain not in $C \cup M$ extends $c$}
        \State $C \gets C - \{c\}$
        \State $M \gets M \cup \{c\}$
      \Else
        \If{$c = []$}
          \State $B \gets \{ [e] \,|\, e \in E, \target{e} = t\}$
        \Else
          \State $B \gets \{ e : c \,|\, e \in E, \target{e} = \source{c[1]} \}$
        \EndIf
        \State remove cyclic chains from $B$
        \State $C \gets C \cup B$
        \State $D \gets D \cup \{ e:c \mapsto D[c] \otimes \depend{e} \,|\, e : c \in B \}$
      \EndIf
    \EndWhile
  \EndProcedure
  \end{algorithmic}
\end{algorithm}

\begin{algorithm}
  \caption{Computing the probabilistic loss of an interface adapter chain.}
  \label{alg:prob-loss}
  \begin{algorithmic}
  \Function{Prob-Loss}{$c$, $D$, $\{ r_m \}$}
    \State $s \gets \source{c[1]}$
    \State $v \gets D[c] \otimes \mathbf{1}'_s$
    \State \textbf{return} $1 -  \sum_m r_m \, v_m$
  \EndFunction
  \end{algorithmic}
\end{algorithm}

Algorithm~\ref{alg:prob-greedy-algorithm} can be easily extended to support behavior similar to service discovery by checking whether the current source is among a potential set of source interfaces instead of just checking against one, as is done with a similar algorithm based on the discrete approach~\cite{chung:ietsoftw2010}.

\section{Conclusions}
\label{sec:conclusions}

Interface adapters can allow code written to use one interface to use another interface, and chaining them together can substantially reduce the effort required to create interface adapters.  Since interface adapters will often be unable to convert interfaces perfectly, loss can be incurred during interface adaptation, and we need a rigorous mathematical framework for analyzing such loss.  Instead of just analyzing whether or not a method in a target interface can be provided, we have developed a probabilistic framework where partial adaptation of methods can also be handled.

We developed the probabilistic framework by first constructing a probabilistic model for interface adaptation.  Based on this, we defined mathematical objects and operations which probabilistically express loss in adapted interfaces and interface adapters, which were then used to prove that probabilistic optimal adapter chaining is NP-complete and to construct a greedy algorithm which can construct an optimal adapter chain with exponential time in the worst case.  These provide a more fine-grained approach to analyzing loss in interface adapter chains compared to a discrete approach.

Future avenues of research include alternate probabilistic approaches which require weaker and more realistic assumptions that can still be feasibly used in real interface adaptation systems.  Another avenue of research is to find good ways to derive the necessary probabilities from the interface adapters, either through empirical means where interface adapters are invoked on many arguments to measure the probabilities or through analytical means which can approximate the probabilities based on program structure.  Finally, there remains the design and implementation of an actual interface adaptation system which takes advantage of the probabilistic approach to analyzing loss in interface adapter chaining.

\bibliographystyle{plain}
\bibliography{strings,articles,hearsay,local,proceedings,books}

\end{document}